\newcommand{\Occurrences}{\mathsf{Occ}}
\newcommand{\LCS}{\mathsf{LCS}}
\newcommand{\ALDS}{\mathsf{ALDS}}
\newcommand{\LDS}{\mathsf{LDS}}
\newcommand{\LIS}{\mathsf{LIS}}
\newcommand{\Otild}{\tilde{O}}
\newcommand{\Exclude}{\mathsf{Exclude}}
\newcommand{\Project}{\mathsf{Project}}
\newcommand{\Init}{\mathsf{Init}}
\newcommand{\Delete}{\mathsf{Delete}}
\newcommand{\RF}{\mathsf{RF}}
\newcommand{\BestMatch}{\mathsf{BestMatch}}
\newcommand{\Match}{\mathsf{Match}}
\newcommand{\polylog}{{\mathsf{polylog}}}
\newcommand{\eps}{\varepsilon}
\DeclarePairedDelimiter\floor{\lfloor}{\rfloor}
\newcommand{\para}[1]{\subparagraph*{#1}}
\newcounter{problemctr}
\renewcommand{\theproblemctr}{\arabic{problemctr}}
\crefname{algocfline}{line}{lines}
\Crefname{algocfline}{Line}{Lines}
\crefname{problemctr}{Problem}{Problems}
\crefname{claim}{Claim}{Claims}
\crefname{observation}{Observation}{Observations}
\title{Deterministic Longest Common Subsequence Approximation in Near-Linear Time}
\author{Itai Boneh}{Reichman University and University of Haifa, Israel \and \url{https://sites.google.com/view/itai-boneh/home-page}}{itai.bone@biu.ac.il}{https://orcid.org/0009-0007-8895-4069}{supported by Israel Science Foundation grant 810/21.}
\author{Shay Golan}{Reichman University and University of Haifa, Israel \and \url{https://sites.google.com/view/shaygolan}}{shayg@ariel.ac.il}{https://orcid.org/0000-0001-8357-2802}{supported by Israel Science Foundation grant 810/21.}
\author{Matan Kraus}{Bar Ilan Univesity, Israel}{matan3@gmail.com}{https://orcid.org/0000-0002-2989-1113}{supported by the ISF grant no. 1926/19, by the BSF grant 2018364, and by the ERC grant MPM under the EU's Horizon 2020 Research and Innovation Programme (grant no. 683064).}
\authorrunning{Boneh, Golan and Kraus}
\keywords{Longest Common Subsequence, Approximation Algorithms, Longest Increasing Subsequence}
\begin{document}

\maketitle
\begin{abstract}
We provide a deterministic algorithm that outputs an $O(n^{3/4} \log n)$-approximation for the Longest Common Subsequence (LCS) of two input sequences of length $n$ in near-linear time.
This is the first \emph{deterministic} approximation algorithm for LCS that achieves a sub-linear approximation ratio in near-linear time.
\end{abstract}

\newpage

\section{Introduction}



The Longest Common Subsequence ($\LCS$) problem is a classic string comparison task, whose standard dynamic programming solution is a staple of introductory computer science curricula.
The $\LCS$ problem and its variations have been studied extensively over the past decades, including in works such as  \cite{WF74, AhoHU76, Hirschberg77, HuntS77, MasekP80, NakatsuKY82, Apostolico86, Myers86, ApostolicoG87, EppsteinGGI92, BergrothHR00, IliopoulosR09, ABWV15, BringmannK15, AbboudHWW16, BringmannK18, AbboudB18, RubinsteinSSS19, RS20, HajiaghayiSSS19}.
Given two sequences, the goal is to find a longest sequence that appears as a subsequence in both, a task fundamental to understanding similarity and structure in discrete data. 
Classical algorithms solve $\LCS$ in quadratic time, specifically $O(n^2)$ for sequences of length $n$, using dynamic programming \cite{WF74,Hir75}. 
However, despite decades of research, improving this bound to truly subquadratic time remains elusive, with conditional lower bounds based on the Strong Exponential Time Hypothesis (SETH) suggesting that no truly subquadratic algorithm exists for general instances \cite{ABWV15}.
These hardness results underscore the intrinsic complexity of $\LCS$ and motivate the exploration of approximation algorithms and specialized cases to circumvent these barriers.

In this work, we consider the problem of approximating $\LCS(x, y)$ of two input sequences $x$ and $y$ in near-linear time.
Specifically, the goal is to output a common subsequence of $x$ and $y$ whose length approximates the length of a longest common subsequence.
For sequences over a binary alphabet, it is trivial to obtain a $1/2$-approximation for $\LCS(x, y)$. 
This trivial $1/2$-approximation ratio remained the best known in this setting until recently, when Rubinstein and Song~\cite{RS20} showed that a slightly better approximation ratio can be achieved for binary sequences of the same length.
The approximation algorithm of Rubinstein and Song is essentially a reduction to Edit Distance approximation.
When plugging in the state-of-the-art Edit Distance approximation algorithm by Andoni and  Nosatzki~\cite{AN20}, their algorithm achieves a $1/2+\eps$ approximation ratio in near-linear time.

Subsequently, Akmal and Vassilevska Williams~\cite{AVW21} presented a subquadratic-time algorithm achieving a better-than-$1/2$ approximation ratio even when the input sequences are allowed to have different lengths. They further generalized their result to obtain a subquadratic algorithm that achieves an approximation ratio better than $1/k$ for sequences over an alphabet of size $k$.
However, for general alphabets, the algorithms of~\cite{RS20,AVW21} do not yield an approximation ratio polynomially better than $1/n$, a trivial guarantee obtainable by selecting a single common character.

In the fully general setting, where no assumptions are made about the alphabet size, a folklore algorithm achieves an $O(\sqrt{n})$-approximation\footnote{There is some discrepancy in the literature regarding how to denote the approximation ratio of an algorithm. In works on small alphabets, the term `$\alpha$-approximation' typically refers to an algorithm that returns a common subsequence of length at least $\alpha L$, where $L$ is the length of the longest common subsequence. In contrast, works on general input sequences often refer to an `$X$-approximation' as an algorithm that returns a subsequence of length at least $L/X$. We adopt the latter convention for the rest of this paper.} in $\Otild(n)$ time\footnote{Throughout this paper, $\Otild(f(n)) = O(f(n) \cdot \polylog n)$.}.
More recently, Hajiaghayi, Seddighin, and Seddighin and Sun~\cite{HSSS22} presented a linear-time algorithm that achieves a slightly better approximation ratio of $O(n^{0.497956})$. Shortly thereafter, Bringmann, Cohen{-}Addad and Das~\cite{BCD23} proposed an improved algorithm, offering an $O(n^{0.4})$-approximation with similar running time.

Interestingly, all known near-linear time algorithms for approximating $\LCS$ in the general setting, where the alphabet may be arbitrarily large, are inherently randomized. This includes the folklore $O(\sqrt{n})$-approximation. Despite the fundamental nature of the $\LCS$ problem and the significant attention its approximation has received in recent years, no \emph{deterministic} near-linear time approximation algorithm is known.
We present an algorithm with $\Otild(n)$ running time, where $n = |x| + |y|$ is the total length of the input sequences, that outputs an $O(n^{3/4} \log n)$-approximation of $\LCS(x, y)$.
This is the first $\LCS$ approximation algorithm to achieve a non-trivial approximation ratio in near-linear time without relying on randomness.
Our result is summarized in the following theorem.
\begin{theorem}\label{thm:main}
    There is a deterministic algorithm that receives two input sequences $x$ and $y$ with $n=|x|+|y|$, and returns in $\Otild(n)$ time a common subsequence $L$ of $x$ and $y$ such that $|L| \ge |\LCS(x,y)|/(n^{3/4}\log n)$.
\end{theorem}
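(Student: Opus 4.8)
The plan is to reduce $\LCS$ approximation to an analysis of how an optimal common subsequence distributes its characters, and then to compute a small family of candidate common subsequences so that at least one is within the target factor $R:=n^{3/4}\log n$ of the optimum. Throughout, fix a longest common subsequence $s$ with $L=|s|$, let $\ell_c$ be the number of times $c$ is used by $s$, and let $m_c=\min(\#_x c,\#_y c)$, so that $c^{m_c}$ is always a common subsequence and $\sum_c \ell_c = L$. I would first dispose of the \emph{concentrated} case: if some symbol has $\ell_c \ge L/R$, then the single-symbol candidate of length $m_c\ge \ell_c$ is already an $R$-approximation, and the best such candidate $\max_c m_c$ is computable in $\Otild(n)$ time.

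The remaining \emph{spread} case is when every $\ell_c < L/R$, so $s$ uses more than $R$ distinct symbols; deleting repetitions gives a common subsequence of more than $R$ pairwise-distinct symbols. It therefore suffices to approximate the repetition-free longest common subsequence ($\RFLCS$) within a factor $\beta=O(\sqrt n\,\polylog n)$, since then the output has length at least $\RFLCS/\beta \ge R/\beta \ge L/R$ for this choice of $R$ (as $R^2/L \ge \sqrt n\,\log^2 n$). To control the number of matches, I would split symbols by a frequency threshold $g=\polylog n$: symbols that occur fewer than $g$ times in $y$ generate at most $|x|\cdot g=\Otild(n)$ matches in total, so their exact contribution is recovered by one $\LIS$/$\LDS$ computation on the explicit near-linear match sequence via the classical $\LCS$-to-$\LIS$ correspondence. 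This leaves only the $g$-frequent symbols, of which at most $|y|/g$ are distinct, and taking the longer of the two candidates loses only a constant factor.

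The main obstacle is the final piece: deterministically extracting, in near-linear time, a long chain of \emph{distinct} frequent symbols, i.e.\ approximating $\RFLCS$ restricted to frequent symbols. Exact $\RFLCS$ is $\mathsf{NP}$-hard, so one must settle for a coarse but controlled approximation, and the naive idea of representing each symbol by a single (say, first) occurrence fails badly: a reversal between the two strings can collapse a length-$\Theta(n)$ distinct chain to length $1$. I would attack this through the match poset under the dominance order, aiming for a deterministic guarantee of the form ``return a distinct chain of length $\ge \sqrt{\RFLCS}$,'' which already supplies the needed $\sqrt n$ factor. The natural route is a $\sqrt{\cdot}$-decomposition: partition the positions of $x$ and of $y$ into $\Theta(\sqrt{\cdot})$ contiguous blocks, and argue that any long distinct chain either stays near-diagonal in the block grid (handled by recursing inside a balanced set of block pairs) or spans many blocks (handled by one representative per block, exploiting the frequent symbols' abundance of occurrences to sidestep the reversal obstruction). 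Making this decomposition simultaneously deterministic, near-linear, and free of any explicit materialization of the (possibly quadratic) match set is the technical heart; I expect this is exactly where string tools such as the longest-previous-factor / Lempel--Ziv structure and internal pattern matching are needed to represent the matches succinctly, and where the extra $n^{1/4}\log n$ beyond the randomized $\sqrt n$ bound is spent.
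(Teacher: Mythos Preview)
Your high-level reduction is sound up to the point you yourself flag: the single-symbol candidate handles the concentrated case, and in the spread case it does suffice to extract a repetition-free common subsequence of length roughly $\sqrt{\RFLCS}$. But the step you call ``the technical heart'' is a genuine gap, and the sketch you give for it does not go through. A $\sqrt{\cdot}$-block decomposition of the match poset, together with unspecified LZ/IPM machinery, provides no deterministic guarantee against the reversal obstruction you already identified: ``frequent symbols have many occurrences'' does not by itself yield a long monotone chain of \emph{distinct} symbols, because you must still commit to one occurrence per symbol, and every choice can be adversarially reversed between $x$ and $y$. Nothing in your proposal says how the block grid or the string-indexing tools pick those occurrences, nor how recursion into block pairs stays near-linear while avoiding the $\mathsf{NP}$-hard core of $\RFLCS$.

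The paper's mechanism is entirely different and much more elementary; no LZ, IPM, or block decomposition appears. Fix $\pi=\RF(x)$ and apply Erd\H{o}s--Szekeres to $\RF(L)$: either $\LIS_\pi(\RF(L))$ is already long, so the single candidate $\LIS_\pi(y)$ suffices, or $D=\LDS_\pi(\RF(L))$ is long. In the latter case the algorithm runs a \emph{greedy LDS peeling} loop: repeatedly compute a $2$-approximate $\LDS_\pi$ of the current $x$, call it $\pi'$, output $\LIS_{\pi'}(y)$ as a candidate, and delete all symbols of $\pi'$ from $x$. The key observation is that the symbols of $D$ lying in $\pi'$ are $\pi$-decreasing in $y$ and hence $\pi'$-increasing, so if $\LIS_{\pi'}(y)$ is short then $\pi'$ meets $D$ in few symbols; thus $D$ survives almost intact and continues to force $|\pi'|\ge |D|/2$ in every round, so $x$ is exhausted before $D$ is, a contradiction unless some round is good. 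The frequency bucketing in the paper (iterate over $f=2^i$ and restrict to $\Sigma_{[f..n]}(x)$) is not there to bound the number of match pairs as in your proposal, but to guarantee that every peeled symbol deletes at least $f$ characters from $x$; this is exactly what tightens the warm-up $n^{4/5}$ bound to $n^{3/4}\log n$.
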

In \cref{sec:prelim}, we present useful notation and pre-existing tools.
In \cref{sec:Warm-up}, we present a simplified version of our main algorithm achieving an $O(n^{4/5})$-approximation.
This version illustrates our novel technique, Greedy LDS peeling, that allows us to approximate $\LCS$ deterministically.
In \cref{sec:main} we enhance the simplified algorithm to finally obtain the $O(n^{3/4}\log n)$-approximation and prove \cref{thm:main}. 


\section{Preliminaries}\label{sec:prelim}

For a natural number $n\in\mathbb N$, we denote $[n]=\{1,2,\dots,n\}$.
We also denote consecutive ranges of integers as $\{ a,a+1,\ldots b\}= [a..b]$.
Throughout this paper, we denote the set of symbols in the input sequences as $\Sigma$.
We denote a sequence $x$ over an alphabet $\Sigma$ as $x=x_1,x_2,\dots,x_{|x|}$.
Given a sequence $x$ we say that $\hat{x}=x_{i_1},x_{i_2},\dots,x_{i_k}$ is a \emph{subsequence} of $x$ if $i_1<i_2<\dots<i_k$.
Given two sequences $x$ and $y$, we say that $z$ is a common subsequence of $x$ and $y$ if $z$ is a subsequence of $x$ and also a subsequence of $y$.
We say that $z$ is a longest common subsequence ($\LCS$) of $x$ and $y$ if for every $\hat{z}$ that is a common subsequence of $x$ and $y$, it holds that $|z|\ge |\hat{z}|$.
We denote $\LCS(x,y)$ as some longest common subsequence of $x$ and $y$.
Note that while the notation $\LCS(x,y)$ is ambiguous, as there are possibly several different longest common subsequences, the term $|\LCS(x,y)|$ is not - as all longest common subsequences of $x$ and $y$ have the same length.

For a symbol $\sigma \in \Sigma$ and a sequence $x \in \Sigma^*$, we denote by $\#_{\sigma}(x) = |\{i\in [|x|]\mid x_i=\sigma \}|$  the number of occurrences of the symbol $\sigma$ in $x$.
We define two functions, both of which were presented by Rubinstein and Song~\cite{RS20}.
Moreover,  both functions can be simply computed in near-linear time.

\begin{definition}[$\Match(x,y,\sigma)$] Let $x$ and $y$ be two sequences and let $\sigma\in\Sigma$ be a symbol, $\Match(x,y,\sigma)=\min\{\#_\sigma(x),\#_\sigma(y)\}$.
\end{definition}

\begin{definition}[$\BestMatch(x,y)$]\label{lem:BestMatch}
Let $x$ and $y$ be two sequences over $\Sigma$, and let $\bar\sigma=\arg\max_{\sigma\in\Sigma}\{\Match(x,y,\sigma)\}$.
Then, $\BestMatch(x,y)= \bar\sigma^{{\Match(x,y,\bar\sigma)}}$.
\end{definition}


\para{Longest Increasing Subsequence}
Let $x=x_1,x_2,\ldots ,x_n$ be a sequence of elements and let $\prec$ be a total order over the elements of $x$.
A $\prec$-increasing subsequence of $x$ is a subsequence of increasing elements in $S$ with respect to $\prec$.
Formally, an increasing subsequence of $S$ specified as an increasing sequence $i_1<i_2< \ldots< i_{\ell}$ of indices such that $x_{i_1} \prec x_{i_2} \prec \ldots \prec x_{i_{\ell}}$. 
We call $\ell$ the length of the increasing subsequence.
A Longest Increasing Subsequence (LIS) of $x$ with respect to $\prec$, denoted by $\LIS_{\prec}(x)$, is a $\prec$-increasing subsequence of $x$ of maximum length among all $\prec$-increasing subsequences of $x$.
One can symmetrically define the Longest Decreasing Subsequence of $x$ with respect to $\prec$, denoted as $\LDS_\prec(x)$, as a maximal length subsequence such that every element is smaller (according to $\prec$) than its predecessor in the sequence.

It is well known that $\LIS_{\prec}(x)$ can be computed for an input sequence $x$ of length $n$ in $\Otild(n)$ time \cite{Knuth73,F75}. 
Our algorithm requires fast approximated LIS computation in the (partially) dynamic settings, where elements are deleted from $x$ and we wish to approximate $\LIS_\prec(x)$ or $\LDS_\prec(x)$ throughout the sequence of deletions.
To this end, we employ the result of Gawrychowski and Janczewski~\cite{GJ21} (see also~\cite{KS21}), who provide a dynamic algorithm maintaining a constant approximation of $\LIS_\prec(x)$ in the more general settings in which both insertions and deletions are allowed.

\begin{lemma}[{\cite[{Theorem 1 and subsequent discussion}]{GJ21}}]\label{lem:dynamic-lis}
There is a fully dynamic algorithm maintaining a $2$-approximation of $|\LIS_{\prec}(x)|$ with insertions and deletions from a sequence $x$ working in $\Otild(1)$ worst-case time per update.
Furthermore, if the returned approximation is $k$, then in $O(k)$ time, the algorithm can also provide an increasing subsequence of length $k$.
\end{lemma}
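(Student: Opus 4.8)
The plan is to take the standard geometric view of the problem and to approximate the longest chain by maintaining only a logarithmic number of ``frontiers.'' First I would map the current sequence to a set of points in the grid $[n]\times[n]$, associating with each element $x_i$ the point $(i,r_i)$, where $r_i$ is the rank of $x_i$ in the total order $\prec$. Under this mapping a $\prec$-increasing subsequence is exactly a chain of points that is strictly increasing in both coordinates, and $|\LIS_{\prec}(x)|$ is the length of a longest such chain. I would keep this point set in a dynamic range-tree / balanced-BST structure keyed by position, so that inserting or deleting a single element, as well as answering dominance queries of the form ``is there a stored point strictly below-and-left of a given point?'', costs $\Otild(1)$ time.

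To get the factor-$2$ guarantee cheaply, I would track chain lengths only up to rounding to powers of two. For each level $\ell\in\{0,1,\dots,\lceil\log n\rceil\}$ I would maintain a \emph{staircase} $F_\ell$: an antichain of points (increasing in position, decreasing in value) that forms the minimal frontier among all points that are endpoints of an increasing subsequence of length at least $2^\ell$. The reported estimate is $k=2^{\ell^\ast}$, where $\ell^\ast$ is the largest level whose staircase is non-empty. Then some point is coverable at level $\ell^\ast$, so $|\LIS_{\prec}(x)|\ge 2^{\ell^\ast}=k$, while no point is coverable at level $\ell^\ast+1$, so $|\LIS_{\prec}(x)|<2^{\ell^\ast+1}=2k$; hence $k$ is a $2$-approximation. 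The staircases nest monotonically across levels, which is what keeps the collection consistent and queryable through the range structure.

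The \emph{Furthermore} part would follow by augmenting the staircases with predecessor information. Alongside the level-$\ell^\ast$ representative endpoint I would store a back-pointer to a preceding element that realizes an increasing subsequence of length one shorter ending just before it, and so on down the realized chain. Tracing these back-pointers from the top endpoint then visits $k=2^{\ell^\ast}$ distinct elements, each in $O(1)$ time, and emits an explicit $\prec$-increasing subsequence of length $k$ in $O(k)$ total time.

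I expect the genuine obstacle to be the \emph{worst-case} (rather than amortized) $\Otild(1)$ bound per update: in a naive implementation a single deletion can cascade through many staircases and force $\omega(\polylog n)$ frontier repairs, and keeping the realized witness chains valid under updates is equally delicate. Resolving this is the heart of the argument and is exactly where I would lean on \cite{GJ21}: one combines a potential/charging argument bounding the total number of staircase modifications with a deamortization scheme (maintaining a background copy and migrating a bounded amount of repair work per operation) so that each insertion or deletion touches only $\polylog n$ frontier entries, each touch costing $\polylog n$ through the dynamic range structure. The structural fact underpinning correctness throughout is the patience-sorting/Mirsky duality, namely that $|\LIS_{\prec}(x)|$ equals the minimum number of $\prec$-decreasing subsequences needed to cover $x$, which is what guarantees that the frontiers faithfully encode the achievable lengths.
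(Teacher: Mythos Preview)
The paper does not prove this lemma at all: it is quoted verbatim as a black-box result from Gawrychowski and Janczewski~\cite{GJ21} (see the citation in the lemma header, ``Theorem~1 and subsequent discussion''). There is therefore no in-paper argument to compare your proposal against. Your sketch is a plausible high-level outline of how one might approach such a data structure, and you correctly flag that the worst-case bound and the deamortization are the real substance; but since the paper treats this entirely as an imported tool, no comparison is possible here.
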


For a sequence $x$ and a set of symbols $\pi$, we define the following functions:
\begin{itemize} 
    \item $\Project(x, \pi)$ - the subsequence of $x$ consisting only of symbols in $\pi$.
    \item $\Exclude(x, \pi)$ - the subsequence of $x$ consisting only of symbols not in $\pi$.
\end{itemize}
A sequence $\pi=\pi_1,\pi_2,\dots,\pi_{|\pi|}$ is called \emph{repetition-free} sequence if for every $i\ne j \in[|\pi|]$ we have $\pi_i\ne\pi_j$.
When using $\Project$ and $\Exclude$, we often abuse notation by using a (repetition-free) sequence in place of a set of symbols. In such cases, we refer to the set implicitly defined by the sequence, namely $\{\pi_i \mid i \in [|\pi|]\}$.

We often interpret a repetition-free sequence $\pi$ as a total order over the symbols of the sequence.
Specifically, the total order $<_{\pi}$ over the symbols of $\pi$ is defined as $\sigma<_\pi \sigma'$ if and only if $\sigma=\pi_i$, $\sigma'=\pi_j$ and $i<j$, i.e., $\sigma$ occurs before $\sigma'$ in $\pi$.
We slightly abuse notation by writing $\LIS_{\pi}(x)$ (for a sequence $x$ over the symbols of $\pi$) to denote a longest increasing subsequence of $x$ with respect to the total order $<_{\pi}$ (instead of the more accurate and cumbersome $\LIS_{<_{\pi}}(x)$).
We further generalize the notation $\LIS_\pi(x)$ to be applicable to sequences $x$ over any alphabet.
To allow this, we define $\LIS_{\pi}(x) =\LIS_\pi(\Project(x,\pi))$ if $x$ contains symbols not in $\pi$.

\para{Erdős-Szekeres theorem.}
The following well-known lemma by Erd\H{o}s and Szekeres~\cite{ES35} will be useful for our algorithms.
\begin{lemma}[{Erd\H{o}s-Szekeres~\cite{ES35}}]\label{lem:ES}
Let $x$ be a repetition-free sequence of length $|x|$, and let $\prec$ be a total order on the elements of $x$.  
Let $i = |\LIS_\prec(x)|$ and $d = |\LDS_\prec(x)|$.  
Then, $i \cdot d \ge |x|$.
\end{lemma}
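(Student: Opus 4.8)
The plan is to prove the Erdős-Szekeres theorem by the classical pigeonhole argument based on labeling each position with a pair of statistics. Concretely, for each index $j \in [|x|]$, I would define $a_j = |\LIS_\prec(x_1,\dots,x_j)|$ where the increasing subsequence is required to \emph{end} at position $j$, and symmetrically $b_j = |\LDS_\prec(x_1,\dots,x_j)|$ for the longest decreasing subsequence ending at $j$. By definition each $a_j$ lies in $\{1,\dots,i\}$ and each $b_j$ lies in $\{1,\dots,d\}$, so the map $j \mapsto (a_j, b_j)$ sends the $|x|$ positions into the grid $[i]\times[d]$, which has exactly $i\cdot d$ cells.

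The key step is to show this map is injective, from which the bound $|x| \le i\cdot d$, i.e. $i\cdot d \ge |x|$, follows immediately by pigeonhole. To see injectivity, take any two distinct indices $j < k$. Since $x$ is repetition-free, $x_j \ne x_k$, and as $\prec$ is a total order we have either $x_j \prec x_k$ or $x_k \prec x_j$. In the first case, any $\prec$-increasing subsequence ending at $j$ can be extended by appending $x_k$, yielding $a_k \ge a_j + 1 > a_j$; in the second case, any $\prec$-decreasing subsequence ending at $j$ can be extended by $x_k$, giving $b_k \ge b_j + 1 > b_j$. Either way $(a_j,b_j)\ne(a_k,b_k)$, so the labels are distinct.

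The main thing to be careful about is the extension argument for $a_j$ and $b_j$: I must verify that appending $x_k$ to an optimal increasing (resp.\ decreasing) subsequence ending at position $j$ indeed produces a valid increasing (resp.\ decreasing) subsequence ending at $k$. This is where the index ordering $j < k$ and the comparison of $x_j$ with $x_k$ combine correctly, since the appended element sits at a later position and respects (or reverses) the order relative to the last element $x_j$ of the witnessing subsequence. Once injectivity is established, no further calculation is required and the proof concludes.
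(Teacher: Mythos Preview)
Your proof is the standard and correct pigeonhole argument for the Erd\H{o}s--Szekeres theorem. Note, however, that the paper does not give its own proof of this lemma at all: it is stated as a classical result with a citation to~\cite{ES35} and used as a black box, so there is no in-paper proof to compare against.
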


    


\section{Warm-up Algorithm}\label{sec:Warm-up}
In this section, we introduce a warm-up algorithm (see \cref{alg:warmup}).
The algorithm is composed of three parts. 
Each part computes a common subsequence of $x$ and $y$ which is a candidate for the output of the algorithm.
At the end, the output is the longest candidate.

At the first part of the algorithm, the algorithm computes $\BestMatch(x,y)$, which is a longest common subsequence of $x$ and $y$ among subsequences composed of a single symbol from $\Sigma$, as a candidate.

In the second part, the algorithm selects a repetition-free subsequence $\pi$ of $x$. 
Specifically, it considers the repetition-free subsequence $\pi=\RF(x)$, which consists of the first occurrence of each symbol in $x$.
We interpret $\pi$ as a total order over the symbols of $x$.
We note that any arbitrary repetition-free subsequence of $x$ containing all symbols of $x$ would be sufficient for our algorithm.
The algorithm computes an $\LIS$ of $\Project(y,\pi)$, with respect to $\pi$.
This $\LIS$ is also a common subsequence of $x$ and $y$, and is considered as a candidate for the output.

The third part of the algorithm iteratively finds $\pi'$, an approximate longest decreasing subsequence of $x$ with respect to $\pi$.
As in the second part, the algorithm interprets $\pi'$ as a total order over the symbols of $\pi'$, and computes an $\LIS$ of $\Project(y,\pi')$, with respect to $\pi'$, as a candidate.
At the end of the iteration, the algorithm removes all occurrences of symbols in $\pi'$ from $x$, and halts if $x$ becomes empty.
Finally, the algorithm returns the longest candidate found throughout all three parts.

In the pseudo-code below, we use the following notation. 
For two sequences $L$ and $L'$, the notation $L \stackrel{\max}{\gets} L'$ means that if $|L'| > |L|$, then $L$ is updated to $L'$; otherwise, $L$ remains unchanged.
In addition, we use the notation $\ALDS_\pi(x)$ to denote a 2-approximation of $\LDS_\pi(x)$ obtained by \cref{lem:dynamic-lis}.


\begin{algorithm}[H]
\caption{Approx-$\LCS(x,y)$}
\label{alg:warmup}

\tcp*[h]{Part 1: Best match}\;
$L \gets \BestMatch(x,y)$\label{line:bestmatch}\;

\tcp*[h]{Part 2: LIS with respect to $\pi$}\;
$\pi\gets \RF(x)$\label{line:getpifirst}\;
$L\stackrel\max\gets \LIS_{\pi}(y)$\label{line:lispifirst};

\tcp*[h]{Part 3: Greedy LDS peeling  of $x$}\;
\While(\label{line:while}){$x\ne\emptyset$}{
    $\pi'\gets \ALDS_{\pi}(x)$\label{line:ALDS}\;
    $L\stackrel\max\gets\LIS_{\pi'}(y)$\label{line:lispiprime}\;
    $x\gets \Exclude(x,\pi')$\label{line:exclude}\;
}

\Return $L$\;
\end{algorithm}

\para{Running Time}
We describe how to implement \cref{alg:warmup} in near-linear time.
Computing $\BestMatch(x,y)$ in \cref{line:bestmatch} can be implemented in $\Otild(n)$ time straight-forwardly.
Computing $\pi = \RF(x)$ in \cref{line:getpifirst} is also trivial to implement in $\Otild(n)$ time.
Finding $\LIS_{\pi}(y)=\LIS_{\pi}(\Project(y,\pi))$ is implemented by straightforwardly finding $\Project(y,\pi)$ in $\Otild(|y|+|\pi|)=\Otild(|y|+|x|)=\Otild(n)$ time, and then applying a near-linear time $\LIS$ algorithm to obtain  $\LIS_{\pi}(\Project(y,\pi))$.

We proceed to describe how to implement the while loop in \cref{line:while}.
An efficient computation of this loop boils down to developing an efficient data structure for finding all occurrences of a given symbol $\sigma$ in $x$ and in $y$.
Since $x$ undergoes deletions throughout the algorithm, this data structure needs to support symbol deletion as well.
We introduce the data structure in the following simple lemma, which is proved for the sake of completeness in \cref{app:occ-ds}.

\begin{restatable}{lemma}{OccDS}
\label{lem:occ-ds}
    There exists a data structure  maintaining a dynamic sequence $x$  supporting the following operations:
    \begin{itemize}
        \item $\Init(x)$ - initiallize the data structure; runs in $\Otild(|x|)$ time.
        \item $\Occurrences(\sigma) $ - returns all the indices in which the symbol $\sigma$ occur, i.e., $\{i\mid x_i=\sigma\}$; runs in   $\Otild(\#_\sigma(x))$ time.
        \item $\Delete(i)$ - deletes the $i$th element of $x$; runs in $\Otild(1)$ time.
    \end{itemize}
\end{restatable}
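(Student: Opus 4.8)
The plan is to maintain, for the dynamic sequence $x$, one balanced binary search tree (BST) per symbol $\sigma\in\Sigma$, where the $\sigma$-tree stores the set of positions $\{i \mid x_i=\sigma\}$ keyed by their index $i$. To handle $\Delete(i)$ in $\Otild(1)$ time I also keep a single global array (or hash table) $P$ that maps each live index $i$ to a pointer to its node inside the relevant symbol-tree, together with the symbol $x_i$ currently stored at position $i$. Here ``index'' refers to the \emph{original} position in the sequence passed to $\Init$; since $\Delete$ removes an element rather than shifting the rest, I never need to renumber surviving elements, which is what keeps deletions cheap.

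First I would implement $\Init(x)$: scan $x$ once, and for each position $i$ insert $i$ into the BST associated with $x_i$, recording the resulting node pointer in $P[i]$. Creating the trees and performing $|x|$ insertions costs $\Otild(|x|)$ total, as required. For $\Occurrences(\sigma)$ I would simply perform an in-order traversal of the $\sigma$-tree and report the indices it contains; if the tree holds $\#_\sigma(x)$ nodes this runs in $\Otild(\#_\sigma(x))$ time (the logarithmic factor is absorbed into $\Otild$, and in-order traversal of a balanced tree is in fact linear in its size). For $\Delete(i)$ I would look up $P[i]$ to find the symbol $\sigma=x_i$ and the node, delete that node from the $\sigma$-tree in $\Otild(1)$ time, and clear $P[i]$; a balanced BST supports deletion in $O(\log n)=\Otild(1)$ time, so the bound holds.

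I should note two implementation choices that make the time bounds clean. To avoid allocating a tree for every symbol of $\Sigma$ up front (which could be wasteful or even impossible for a huge alphabet), I would index the symbol-trees through a dynamic dictionary keyed by $\sigma$, created lazily as symbols are first encountered during $\Init$; a comparison-based balanced dictionary gives $\Otild(1)$ access per symbol and $\Otild(|x|)$ total, which suffices for near-linear time. The array $P$ indexed by original position has size $|x|$ and is built during $\Init$, so it does not affect the asymptotics.

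The only mildly delicate point — and the step I would be most careful about — is ensuring that $\Delete$ is genuinely $\Otild(1)$ and not dependent on locating the element by its symbol. This is exactly why $P$ stores a direct node pointer: $\Delete(i)$ must not search any tree by symbol, since that would reintroduce a dependence on $\#_\sigma(x)$. With the pointer in hand, deletion is a single balanced-BST removal. Everything else is routine, so I expect no real obstacle; the correctness of $\Occurrences$ after a sequence of deletions is immediate because each $\sigma$-tree always holds precisely the indices of the still-live occurrences of $\sigma$.
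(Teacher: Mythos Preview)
Your construction is clean, but it rests on an assumption that does not match the lemma's semantics. You explicitly stipulate that ``index'' means the \emph{original} position in the sequence handed to $\Init$, so that deletions never cause renumbering. The lemma, however, speaks about a \emph{dynamic} sequence $x$: $\Occurrences(\sigma)$ must return $\{i\mid x_i=\sigma\}$ for the \emph{current} $x$, and $\Delete(i)$ must remove the \emph{current} $i$th element. After some deletions, the surviving elements are re-indexed $1,2,\ldots$, and both operations are supposed to work with these shifted indices. Your per-symbol trees keyed by fixed original positions cannot answer either query in the required sense: $\Occurrences$ would report stale positions, and $\Delete(i)$ for a current rank $i$ cannot even locate the right node. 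This also matters for the paper's application, since the indices returned by the $x$-instance are fed into the dynamic $\LIS$ structure, which operates on the current sequence.

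The paper's proof supplies exactly the missing ingredient you discarded: a global order-statistic tree $T_I$ over all still-live positions, with subtree sizes, so that one can convert in $O(\log n)$ time between a stored node pointer and its current rank. Each $T_\sigma$ stores pointers into $T_I$ rather than raw integers; deletions in $T_I$ then implicitly shift all larger ranks, and $\Occurrences(\sigma)$ reports current indices by reading ranks off $T_I$. Your pointer array $P$ is the right idea for making $\Delete$ cheap once the target node is known, but you still need the rank structure (or an equivalent) to interpret and produce dynamic indices; without it the proof does not establish the stated lemma.
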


We now proceed to describe the implementation of the while loop.  
We begin by initializing the data structure $D_{\LIS}$ from \cref{lem:dynamic-lis} for $x$ with respect to the order $\pi$, as well as two instances of the data structure from \cref{lem:occ-ds}: $D_{\Occurrences}^x$ for $x$ and $D_{\Occurrences}^y$ for $y$.
At every iteration, the algorithm uses $D_{\LIS}$ to obtain $\pi'$ which is a $2$-approximation of $\LDS(x)$ in $O(|\pi'|)$ time.
The algorithm uses $D_\Occurrences^y$ to find all indices in $y$ in which symbols of $\pi'$ occur in $O(|\sum_{\sigma \in \pi'}\#_\sigma(y)|)= \Otild(|\Project(y,\pi')|)$ time.
The algorithm iterates these indices and concatenates the corresponding symbols from $y$ to obtain $\Project(y,\pi')$ in $\Otild(|\Project(y,\pi')|)$ time.
Then, the algorithm applies a near linear time $\LIS$ algorithm to compute $\LIS_{\pi'}(\Project(y,\pi'))$.
Next, the algorithm finds all indices in $x$ containing a symbol of $\pi'$ in $\Otild(|\sum_{\sigma \in \pi'}\#_\sigma(x)|)$ time.
The algorithm then updates both $D_{\LIS}$ and $D_{\Occurrences}^x$ to delete all indices in $x$ that contain occurrences of symbols from $\pi'$.
This implements $x\gets \Exclude(x,\pi')$. 
The symbols of $\pi'$ in each iteration are disjoint from the symbols of $\pi'$ in every  previous iteration.
Therefore, the total running time for implementing \cref{line:ALDS,line:lispiprime,line:exclude} is bounded by $\Otild(\sum_{\sigma \in \Sigma}\#_\sigma(x) +\#_\sigma(y)) = \Otild(|x|+|y|)=\Otild(n)$.

\para{Correctness.}
We show that each candidate for $L$ is a common subsequence of $x$ and $y$.
For $L=\BestMatch(x,y)$ this is trivial.
The rest of the candidates are created by taking $\eta$, a repetition-free subsequence of $x$,  and finding an increasing subsequence of $y$ with respect to $\eta$.
Clearly, the result is a subsequence of $y$, and it is also a subsequence of $x$ as a subsequence of $\eta$.

\para{Approximation ratio.}
Clearly, if $|\LCS(x,y)| = 0$ the algorithm would not find any common subsequence, which is a satisfactory answer.
If $|\LCS(x,y)| \ge 1$, in particular there is a common symbol occurring both in $x$ and in $y$.
It immediately follows that $\BestMatch(x,y)$ returns a common subsequence of length at least $1$, which is an $n^{4/5}$-approximation if $|\LCS(x,y)| \le n^{4/5}$.
Let us assume from now on that $|\LCS(x,y)| > n^{4/5}$. In particular, let $0< t\le 1/5$ be the number such that $|\LCS(x,y)|=n^{4/5+t}$.

If $|\BestMatch(x,y)|\ge n^{t}$, we have found a candidate which is an $n^{4/5}$-approximation for $\LCS(x,y)$.
Let us assume from now on that $|\BestMatch(x,y)| < n^{t}$.
This means that no symbol occurs in both $x$ and $y$ more than $n^{t}$ times.
Let $L$ be some longest common subsequence of $x$ and $y$.
In particular, no letter occurs in $L$ more than $n^{t}$ times.
Consider the subsequence $\RF(L)$ of $L$ in which an arbitrary occurrence of each symbol is taken and everything else is deleted.
Since every symbol occurs in $L$ less than $n^{t}$ times and $|L| = n^{4/5+t}$, we have that $|\RF(L)| > n^{4/5}$.

Assume that $|\LIS_\pi(\RF(L))|\ge n^{t}$.
Since $\RF(L)$ is a subsequence of $y$, we have in particular that $|\LIS_\pi(y)|\ge |\RF(L)|\ge  n^{t}$. 
In this case, \cref{line:lispifirst} returns a candidate of length at least $n^{t}$, which is an $n^{4/5}$-approximation for every possible length of $\LCS(x,y)$.

Assume from now on that $|\LIS_\pi(\RF(L))|<n^{t}$.
By Erdős-Szekeres theorem (\cref{lem:ES}), it holds that $|\LDS_\pi(\RF(L))|\ge |\RF(L)|/n^{t} \ge n^{4/5-t}$.

Before proceeding to analyze the while loop in \cref{line:while}, we provide an intuitive explanation for why it should work.  
Let $D = \LDS_{\pi}(\RF(L))$ be the longest decreasing subsequence of the repetition-free reduction of the $\LCS$ according to $\pi$.  
As a common subsequence, $D$ appears in both $x$ and $y$.

At each iteration of the while loop, the algorithm finds an approximate longest decreasing subsequence $\pi'$ of $x$ with respect to $\pi$, and removes from $x$ all symbols participating in $\pi'$.  
Initially, $D$ is a valid candidate for the longest decreasing subsequence of $x$ with respect to $\pi$.

Notice that the symbols of $\pi'$ appear in the same order in $\pi'$ and in $D$ (i.e., decreasing in $\pi$), which is a subsequence of $y$.  
We therefore say that an iteration in which $\pi'$ contains many symbols of $D$ is \emph{good} - such an iteration would yield a large candidate $\LIS_{\pi'}(y)$ for the $\LCS$.

If an iteration is not \emph{good}, it removes only a small fraction of $D$, so in the next iteration, $D$ remains fairly large, and the same argument can be applied again.
By repeatedly applying this reasoning, we conclude that $x$ is reduced relatively quickly - $D$ is barely affected by a sequence of \emph{bad} iterations and therefore remains a candidate for the next $\pi'$.
By carefully selecting a threshold that distinguishes \emph{good} from \emph{bad} iterations, we show that if all iterations are \emph{bad}, then $x$ is completely deleted after a certain number of iterations.  
On the other hand, under this assumption, some symbols of $D$ must still remain in $x$ after this many iterations - a contradiction.  
Hence, we conclude that at least one iteration must be \emph{good}.

We now proceed to formally bound the approximation ratio achieved by the algorithm.
Consider the $i$th iteration of the while loop in \cref{line:while}.
Denote by $\pi^i$ the value of $\pi'$ in this iteration, let $\ell_i$ be the length of $\LIS_{\pi'}(y)$ in \cref{line:lispiprime} in this iteration, and let $x^i$ be the value of $x$ at the beginning of the iteration ($x^0=x$).
We say that the $i$th iteration is \emph{bad}, if $\ell_i<n^{t}/4$.
Clearly, if there is some iteration that is not bad, the algorithm finds a candidate that is a $4n^{4/5}$-approximation of $\LCS(x,y)$.
In the following lemma we prove that there is an iteration that is not bad.
This concludes the proof that \cref{alg:warmup} is a $4n^{4/5}$-approximation algorithm for the $\LCS$ problem.

\begin{claim}\label{lem:notbadexists}
    There is at least one iteration that is not bad.
\end{claim}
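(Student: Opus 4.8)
The plan is to argue by contradiction: assume every iteration is bad, i.e.\ $\ell_i < n^{t}/4$ for all $i$, and derive a contradiction. The central object is the decreasing subsequence $D = \LDS_\pi(\RF(L))$ fixed above, which under our running assumptions is a \emph{common} subsequence of $x$ and $y$ (being a subsequence of $\RF(L)$, hence of $L$) that is decreasing with respect to $<_\pi$ and has length $|D| \ge n^{4/5-t}$. The heart of the argument is the inequality $\ell_i \ge |\pi^i \cap D|$, relating the candidate length produced in iteration $i$ to the number of symbols of $D$ that this iteration deletes from $x$. Granting this, each bad iteration erases fewer than $n^{t}/4$ symbols of $D$, so $D$ is depleted slowly; meanwhile the $2$-approximation guarantee forces each $\pi^i$ to be long, so $x$ loses its distinct symbols quickly. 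Pitting these two rates against each other yields the contradiction.

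To prove $\ell_i \ge |\pi^i \cap D|$, I would exploit that both $\pi^i$ and $D$ are decreasing with respect to $<_\pi$. Reading the symbols of $\pi^i$ in their order inside $\pi^i$ reverses $<_\pi$, so the induced order $<_{\pi^i}$ satisfies $a <_{\pi^i} b \iff a >_\pi b$ for $a,b \in \pi^i$. Now take the symbols common to $\pi^i$ and $D$ at their occurrences inside $D$, which is itself a subsequence of $y$. Since $D$ is $<_\pi$-decreasing, these common symbols appear along $y$ in $<_\pi$-decreasing order, hence in $<_{\pi^i}$-increasing order; thus they form a $<_{\pi^i}$-increasing subsequence of $\Project(y,\pi^i)$ of length $|\pi^i\cap D|$. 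Therefore $\ell_i = |\LIS_{\pi^i}(y)| \ge |\pi^i\cap D|$, as claimed.

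For the counting argument, let $D_i$ denote the set of symbols of $D$ still present in $x^i$; the surviving occurrences keep their relative order, so $D_i$ remains a $<_\pi$-decreasing subsequence of $x^i$, and hence $|\pi^i| \ge |\LDS_\pi(x^i)|/2 \ge |D_i|/2$ by \cref{lem:dynamic-lis}. Since iteration $i$ deletes exactly $|\pi^i\cap D| \le \ell_i < n^t/4$ symbols of $D$, we get $|D_i| > n^{4/5-t} - i\cdot n^t/4$. Hence for every $i < M := 2n^{4/5-2t}$ we have $|D_i| > n^{4/5-t}/2 > 0$, so $x^i$ is nonempty and the loop runs through at least these $M$ iterations, each deleting $|\pi^i| > n^{4/5-t}/4$ distinct symbols. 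The distinct symbols deleted across iterations are disjoint, so $x$ loses more than $M\cdot n^{4/5-t}/4 = n^{8/5-3t}/2$ distinct symbols, which is impossible once this exceeds the number of distinct symbols in $x$ (at most $|x| < n$). This contradiction shows that some iteration is not bad.

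The main obstacle is making this quantitative collision genuine, in particular the boundary case $t\to 1/5$. Here I would use that $|\LCS(x,y)| = n^{4/5+t} \le \min(|x|,|y|) \le n/2$, which forces $n^{1/5-t}\ge 2$ and hence $n^{3/5-3t}\ge 8$; consequently $n^{8/5-3t}/2 \ge 4n$ is strictly larger than the available supply of distinct symbols, so the contradiction is robust rather than merely borderline. Conceptually the step $\ell_i \ge |\pi^i\cap D|$ is the key insight, but the delicate part is choosing the cutoff $M$ so that $D$ provably survives all $M$ iterations while the forced length of each $\pi^i$ already overspends the symbol budget of $x$.
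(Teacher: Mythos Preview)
Your proof is correct and follows essentially the same approach as the paper: assume all iterations are bad, establish $\ell_i \ge |\pi^i\cap D|$ via the ``both $\pi^i$ and $D$ are $<_\pi$-decreasing'' observation, then contrast the slow depletion of $D$ against the fast depletion of $x$ (using $|\pi^i|\ge |D_i|/2$) to reach a contradiction. Your cutoff $M = 2n^{4/5-2t}$ differs from the paper's $z = 2n^{1/5+t}$, and your boundary handling via $|\LCS|\le n/2 \Rightarrow n^{1/5-t}\ge 2$ is a bit more careful, but the argument is the same.
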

\begin{claimproof}
Assume by contradiction that all the iterations performed by the algorithm are bad.
Let $D=\LDS_\pi(\RF(L))$, and for every iteration $i$ let 
$D_i=\Exclude(D,\bigcup_{j=0}^{i-1}\pi^j)=\Exclude(D_{i-1},\pi^{i-1})$, and let 
$x^i=\Exclude(x,\bigcup_{j=0}^{i-1}\pi^j)=\Exclude(x^{i-1},\pi^{i-1})$.
Notice that $D_i$ is a subsequence of $x^i$ which is decreasing with respect to $\pi$.
Since $\pi^i$ is a 2-approximation of $\LDS_\pi(x^i)$, it holds that $|\pi^i|\ge |\LDS_\pi(x^i)|/2\ge |D_i|/2$.

Let $\tau^i=\Project(D_i,\pi^i)$.
Since $\tau^i$ is a subsequence of $D$, $\tau^i$ is a decreasing  subsequence  of $y$ with respect to $\pi$.
Since $\pi^i$ is also decreasing with  repsect to $\pi$, then $\tau^i$ is increasing with respect to $\pi^i$.
It follows that $\ell_i=|\LIS_{\pi^i}(y)|\ge |\tau^i|$.
Since the $i$th iteration is bad, $|\tau^i|\le \ell_i\le n^{t}/4$.
It follows that $|D_{i+1}|=|\Exclude(D_i,\pi^i)|=|D_i|-|\Project(D_i,\pi^i)|= |D_i|-|\tau^i|\ge |D_i|-n^{t}/4$.
By applying the above inequality inductively, we obtain $|D_i|\ge|D|-i\cdot n^{t}/4>n^{4/5-t}-i\cdot n^{t}/4$.

On the other hand, we have that
\[|x^{i+1}|=|\Exclude(x^{i},\pi^{i})|\le |x^{i}|-|\pi^{i}|\le |x^{i}|-|D_{i}|/2,\] where the first inequality holds since each symbol of $\pi^i$ occurs in $x^i$ at least once (as $\pi^i$ is a subsequence of $x^i$).
By applying the above inequality inductively and observing that $|D_1|,|D_2|,\dots, |D_i|$ is a monotone decreasing sequence, we obtain $|x^i|\le |x|-i\cdot |D_i|/2$.

Let $z=2n^{1/5+t}$.
On one hand, the $z$th iteration of the algorithm occurs, since $|D_z|\ge |D|-2n^{1/5+t}\cdot n^{t}/4\ge n^{4/5-t}-n^{1/5+2t}/2\ge n^{4/5-t}/2$ where the last inequality holds since $t\le 1/5$.
In particular, $x^z$ is not empty.
On the other hand, the length of $x^z$ in this iteration is
$|x^z|\le |x|-2n^{1/5+t}\cdot |D_z|/2\le n-2n^{1/5+t}\cdot n^{4/5-t}/2=n-n=0$, in contradiction.
\end{claimproof}

\section{Better Algorithm - Proof of \cref{thm:main}}\label{sec:main}
In this section, we present an improved approximation algorithm that achieves an $O(n^{3/4} \log n)$-approximation to $\LCS(x, y)$.

For an interval of numbers $[a..b]$ denote $\Sigma_{[a..b]}(x)=\{\sigma\in \Sigma\mid \#_\sigma(x)\in [a..b]\}$.
The following pseudocode describes the algorithm.

\begin{algorithm}[H]
\caption{Better-Approx-$\LCS(x,y)$}
\label{alg:better}
\setcounter{AlgoLine}{0}

\ForEach(\label{line:foreach}){$f\in\{2^i\mid i\in[0..\floor{\log n}]\}$}{
    $x'\gets \Project(x,\Sigma_{[f..n]}(x))$;
    
    
    $L\stackrel\max\gets$Approx-$\LCS(x',y)$; \tcp*{Call to \cref{alg:warmup}}
}
\Return $L$\;
\end{algorithm}

\para{Running Time.}
The while loop of the algorithm runs $O(\log n)$ times.
In each iteration, filtering out the least frequent symbols can be straight-forwardly implemented in $\Otild(n)$ time.
Then, running algorithm Approx-$\LCS(x',y)$ takes $\Otild(n)$ time.
Thus, the total running time is $\Otild(n)$.

\para{Correctness.}
The algorithm returns a common subsequence obtained by Approx-$\LCS(x',y)$ for some subsequence $x'$ of $x$ due to the correctness of \cref{alg:warmup}.
Since $x'$ is a subsequence of $x$, the output is a common subsequence of $x$ and $y$.

\para{Approximation Ratio.}
Let $L$ be some longest common subsequence of $x$ and $y$. 
We observe that for some power $2^i$ of $2$, a significant fraction of $L$ consists of symbols that appear roughly $2^i$ times in $L$.


\begin{lemma}\label{lem:goodf}
    For some $f \in \{2^i \mid i \in [0..\floor{\log n}]\}$, we have
    \[
    |\Project(L, \Sigma_{[f..2f)}(L))| \ge \frac{|L|}{2\log n} .
    \]
\end{lemma}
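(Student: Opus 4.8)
The plan is to use a dyadic bucketing of the symbols according to their multiplicity in $L$, followed by an averaging (pigeonhole) argument over the $O(\log n)$ buckets. The candidate values $f = 2^i$ for $i \in [0..\floor{\log n}]$ induce the half-open intervals $[2^i..2^{i+1})$, and the core observation is that these intervals partition the range of possible occurrence counts, so summing the bucket contributions reconstructs $|L|$ exactly.

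First I would verify that the intervals $[2^i..2^{i+1})$ for $i \in [0..\floor{\log n}]$ cover all relevant multiplicities. Since $2^{\floor{\log n}} \le n < 2^{\floor{\log n}+1}$, their union is $[1..2^{\floor{\log n}+1})$, which contains $[1..n]$, and the intervals are pairwise disjoint. Every symbol $\sigma$ with $\#_\sigma(L) \ge 1$ satisfies $\#_\sigma(L) \le |L| \le n$, so it lies in exactly one set $\Sigma_{[2^i..2^{i+1})}(L)$.

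Next I would rewrite $|L|$ as a sum of bucket contributions. Since projecting $L$ onto a set of symbols keeps exactly the occurrences of those symbols, we have $|\Project(L, \Sigma_{[2^i..2^{i+1})}(L))| = \sum_{\sigma \in \Sigma_{[2^i..2^{i+1})}(L)} \#_\sigma(L)$, and summing over the disjoint buckets counts every occurrence of every symbol of $L$ exactly once:
\[
|L| = \sum_{\sigma} \#_\sigma(L) = \sum_{i=0}^{\floor{\log n}} \bigl|\Project\bigl(L, \Sigma_{[2^i..2^{i+1})}(L)\bigr)\bigr|.
\]

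Finally I would apply pigeonhole: the right-hand side is a sum of $\floor{\log n}+1 \le 2\log n$ terms equal to $|L|$, so at least one term is at least $|L|/(\floor{\log n}+1) \ge |L|/(2\log n)$, and the corresponding $f = 2^i$ witnesses the claim. I do not expect a genuine obstacle here, as this is a routine averaging argument; the only points requiring care are checking that the dyadic intervals truly cover the whole range of multiplicities up to $n$ (so that no occurrences are dropped from the sum) and that the number of buckets is bounded by $2\log n$ rather than merely $\log n$, which holds once $n \ge 2$.
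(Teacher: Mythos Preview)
Your proposal is correct and follows essentially the same argument as the paper: both partition the symbols of $L$ into dyadic multiplicity buckets $\Sigma_{[2^i..2^{i+1})}(L)$, observe that the bucket contributions sum to $|L|$, and apply pigeonhole over the at most $1+\floor{\log n}\le 2\log n$ buckets. Your write-up is slightly more explicit about verifying that the intervals cover $[1..n]$ and that the bucket count is bounded by $2\log n$, but the underlying idea is identical.
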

\begin{proof}
    The claim follows directly from the pigeonhole principle. 
    Formally,
    \begin{align*}
    |L| = \sum_{\sigma \in \Sigma} \#_\sigma(L)
    &= \sum_{f \in \{2^i \mid i \in [0..\floor{\log n}]\}} \sum_{\sigma \in \Sigma_{[f..2f)(L)}} \#_\sigma(L)
    \\&= \sum_{f \in \{2^i \mid i \in [0..\floor{\log n}]\}} |\Project(L, \Sigma_{[f..2f)}(L))|.
    \end{align*}
    By the pigeonhole principle, at least one of the summands must be at least $|L|$ divided by the number of summands, which is at most $1+\log n\le 2\log n$. 
    Thus, the claim follows.
\end{proof}

Let $f$ be  some $f \in \{2^i \mid i \in [0..\floor{\log n}]\}$, such that
    $|\Project(L, \Sigma_{[f..2f)}(L))| \ge  \frac{|L|}{2\log n}$ (the existence of $f$ follows from \cref{lem:goodf}).
    Denote $L'=\Project(L, \Sigma_{[f..2f)}(L))$. 
    We proceed to analyze the approximation ratio achieved by running Approx-$\LCS(x',y)$ for
    $x'=\Project(x,\Sigma_{[f,n]}(x))$.
    In particular we will show that Approx-$\LCS(x',y)$ returns an $O(n^{3/4}\log n)$-approximation for $\LCS(x,y)$.
    A useful property of $x'$, is that for any $\sigma\in\Sigma$ that occurs in $x$ we have $\#_\sigma(x)\ge f$.

Clearly, if $|\LCS(x,y)| = 0$ the algorithm would not find any common subsequence, which is a satisfactory answer.
If $|\LCS(x,y)| \ge 1$, in particular there is a common symbol occurring both in $x'$ and in $y$.
It immediately follows that $\BestMatch(x',y)$ returns a common subsequence of length at least $1$, which is an $n^{3/4}$-approximation if $|\LCS(x,y)| \le n^{3/4}$.
Let us assume from now on that $
|L|=|\LCS(x,y)| > n^{3/4}$. 
In particular, let $0< t\le 1/4$ be the number such that $|L|=n^{3/4+t}$.


Assume that $f\ge n^{1/4}$.
Then, in particular, there is some symbol $\sigma\in \Sigma_{[f..2f)}(L)\subseteq\Sigma_{[f..n)}(x)$ such that $\#_\sigma(L')\ge f\ge n^{1/4}$, and $|\BestMatch(x',y)|\ge \Match(x',y,\sigma)\ge \#_\sigma(L')\ge n^{1/4}$.
It follows that  $\BestMatch(x',y)$ is an $n^{3/4}$-approximation of $\LCS(x,y)$.
We therefore assume from now on that $f< n^{1/4}$.

Recall that $L'=\Project(L, \Sigma_{[f..2f)}(L))$.
Consider the subsequence $\RF(L')$ of $L'$ in which an arbitrary occurrence of each symbol is taken and everything else is deleted.
Since every symbol occurs in $L'$ less than $2f$ times, we have that $|\RF(L')| > |L'|/2f\ge \frac{|\LCS(x,y)|}{2\log n}/2f = \frac{n^{3/4+t}}{4f\log n}$.

Assume that $|\LIS_\pi(\RF(L'))|\ge n^{t}/\log n$.
Since $\LIS_{\pi}(\RF(L'))$ is a subsequence of $y$, we have in particular that  $|\LIS_\pi(y)|\ge  n^{t}/\log n$. 
In this case, \cref{line:lispifirst} returns a candidate of length at least $n^{t}/ \log n$, which is an $n^{3/4} \log n$-approximation for $\LCS(x,y)$.

Assume from now on that $|\LIS_\pi(\RF(L'))|<n^{t}/\log n$.
By Erdős-Szekeres theorem (\cref{lem:ES}), it holds that $|\LDS_\pi(\RF(L'))|\ge |\RF(L')|/(n^{t}/\log n) \ge \frac{n^{3/4}}{4f}$.
Consider the $i$th iteration of the while loop in \cref{line:while}.
Denote $\pi^i$ as the value of $\pi'$ in this iteration, let $\ell_i$ be the length of $\LIS_{\pi'}(y)$ in \cref{line:lispiprime} in this iteration, and let $x^i$ the value of $x'$ at the beginning of the iteration ($x^0=x'$).
We say that the $i$th iteration is \emph{bad}, if $\ell_i<\frac{n^{t}}{200}$.
Clearly, if there is some iteration that is not bad, the algorithm finds a candidate that is an $O(n^{3/4})$-approximation of $\LCS(x,y)$.
In the following lemma we prove that there is an iteration that is not bad.
This concludes the proof that \cref{alg:better} is an $O(n^{3/4}\log n)$-approximation algorithm for the $\LCS$ problem, proving \cref{thm:main}.

\begin{claim}\label{lem:notbadexists_better}
    There is at least one iteration that is not bad.
\end{claim}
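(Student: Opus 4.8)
The plan is to mirror the proof of \cref{lem:notbadexists}, replacing the two competing estimates by their analogues for the improved algorithm and exploiting the extra structure of $x'$. Assuming for contradiction that every iteration is bad, I would set $D = \LDS_\pi(\RF(L'))$ and, for each iteration $i$, define $D_i = \Exclude(D,\bigcup_{j=0}^{i-1}\pi^j)$ and $x^i = \Exclude(x',\bigcup_{j=0}^{i-1}\pi^j)$, so that $D_i$ is a $\pi$-decreasing subsequence of $x^i$ and hence $|\pi^i| \ge |\LDS_\pi(x^i)|/2 \ge |D_i|/2$. As before, $\tau^i = \Project(D_i,\pi^i)$ is a subsequence of $y$ that is increasing with respect to $\pi^i$, so $\ell_i \ge |\tau^i|$; the bad-iteration assumption $\ell_i < n^{t}/200$ then gives $|D_{i+1}| \ge |D_i| - n^{t}/200$ and, inductively, $|D_i| \ge |D| - i\cdot n^{t}/200 \ge \tfrac{n^{3/4}}{4f} - i\cdot n^{t}/200$, using the earlier bound $|\LDS_\pi(\RF(L'))|\ge \tfrac{n^{3/4}}{4f}$.

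The key new ingredient, and the step I expect to do the real work, is a strengthened bound on how fast $x'$ shrinks. Crucially, the while loop deletes symbols from $x$ \emph{in their entirety} via $\Exclude$, so any symbol surviving into $x^i$ still occurs there as many times as it did in $x' = x^0$; and since $x' = \Project(x,\Sigma_{[f..n]}(x))$, every symbol of $x'$ occurs at least $f$ times. Hence deleting the $|\pi^i|$ distinct symbols of $\pi^i$ from $x^i$ removes at least $f\cdot|\pi^i|$ positions, yielding $|x^{i+1}| \le |x^i| - f\cdot|\pi^i| \le |x^i| - f\cdot|D_i|/2$. Unrolling and using the monotonicity $|D_0|\ge|D_1|\ge\cdots$ gives $|x^i| \le n - i\cdot f\cdot|D_i|/2$. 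This additional factor of $f$ is exactly what compensates for the weaker starting bound $|D|\ge \tfrac{n^{3/4}}{4f}$ (versus $n^{4/5-t}$ in the warm-up).

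To reach the contradiction I would take $z = 16n^{1/4}$ and verify the two halves. First, the assumptions $f < n^{1/4}$ and $t \le 1/4$ give $z\cdot n^{t}/200 \le |D|/2$, so $|D_z| \ge |D|/2 \ge \tfrac{n^{3/4}}{8f} \ge 1$, meaning $D_z$ — and therefore $x^z$ — is nonempty and iteration $z$ really occurs. Second, substituting $|D_z| \ge \tfrac{n^{3/4}}{8f}$ into the shrink bound gives $|x^z| \le n - z\cdot f\cdot|D_z|/2 \le n - 16n^{1/4}\cdot\tfrac{n^{3/4}}{16} = 0$, so $x^z$ is empty, a contradiction. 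The main thing to get right is the bookkeeping in these final inequalities: the exponent arithmetic must make $z\cdot f\cdot|D_z|$ land on exactly $n$, and one must check that the two regime assumptions ($f < n^{1/4}$ and $t \le 1/4$, via $n^{1/4}\le n^{1/2-t}$) are used consistently so that both ``iteration $z$ occurs'' and ``$x^z$ is empty'' hold for the same choice of $z$.
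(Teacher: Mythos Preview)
Your proposal is correct and follows essentially the same argument as the paper: the same definitions of $D$, $D_i$, $x^i$, the same key observation that each symbol of $\pi^i$ contributes at least $f$ deletions from $x^i$, and the same two-sided contradiction at a threshold $z = \Theta(n^{1/4})$. The only cosmetic difference is the constant you pick ($z=16n^{1/4}$ versus the paper's $z=25n^{1/4}$); both choices work for the same reason, namely $f < n^{1/4}$ and $t \le 1/4$ force $n^{1/2-t} \ge n^{1/4} > f$.
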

\begin{claimproof}
Assume by contradiction that all the iterations performed by the algorithm are bad.
Let $D=\LDS_\pi(\RF(L'))$, let 
$D_i=\Exclude(D,\bigcup_{j=0}^{i-1}\pi^j)=\Exclude(D_{i-1},\pi^{i-1})$, and let 
$x^i=\Exclude(x',\bigcup_{j=0}^{i-1}\pi^j)=\Exclude(x^{i-1},\pi^{i-1})$.
Notice that $D_i$ is a subsequence of $x^i$ which is decreasing with respect to $\pi$.
Since $\pi^i$ is a 2-approximation of $\LDS_\pi(x^i)$, it holds that $|\pi^i|\ge |\LDS_\pi(x^i)|/2\ge |D_i|/2$.

Let $\tau^i=\Project(D_i,\pi^i)$.
Since $\tau^i$ is a subsequence of $D$, $\tau^i$ is a decreasing  subsequence  of $y$ with respect to $\pi$.
Since $\pi^i$ is also decreasing with  respect to $\pi$, then $\tau^i$ is increasing with respect to $\pi^i$.
It follows that $\ell_i=|\LIS_{\pi^i}(y)|\ge |\tau^i|$.
Since the $i$th iteration is bad, $|\tau^i|\le \ell_i\le \frac{n^{t}}{200}$.
It follows that $|D_{i+1}|=|\Exclude(D_i,\pi^i)|=|D_i|-|\Project(D_i,\pi^i)|= |D_i|-|\tau^i|\ge |D_i|-\frac{n^{t}}{200}$.
By applying the above inequality inductively, we obtain $|D_i|\ge|D|-i\cdot \frac{n^{t}}{200}>\frac{n^{3/4}}{4f}-i\cdot \frac{n^{t}}{200}$.

On the other hand, we have that
\[|x^{i+1}|=|\Exclude(x^{i},\pi^{i})|\le |x^{i}|-|\pi^{i}|\cdot f\le |x^{i}|-|D_{i}|\cdot f/2,\] where the first inequality holds since each symbol of $\pi^i$ occurs in $x^i$ at least $f$ times (as $\pi^i$ is a subsequence of $x^i$).
By applying the above inequality inductively and observing that $|D_1|,|D_2|,\dots, |D_i|$ is a monotone decreasing sequence, we obtain $|x^i|\le |x'|-i\cdot |D_i|\cdot f/2$.

Let $z=25n^{1/4}$.
On one hand, the $z$th iteration of the algorithm occurs, since $|D_z|\ge |D|-25n^{1/4}\cdot \frac{n^{t}}{200}\ge \frac{n^{3/4}}{4f}-\frac{n^{1/4+t}}{8}\ge \frac{n^{3/4}}{8f\log n}$ where the last inequality holds since $t\le 1/4$ and $f\le n^{1/4}$.
In particular, $x^z$ is not empty.
On the other hand, the length of $x^z$ in this iteration is
$|x^z|\le |x|-25n^{1/4}\cdot\frac{n^{3/4}}{8f}\cdot f/2\le n-25/16\cdot n\le 0$, in contradiction.
\end{claimproof}


    


\bibliography{ref.bib}

\begin{thebibliography}{AHWW16}

\bibitem[AB18]{AbboudB18}
Amir Abboud and Karl Bringmann.
\newblock Tighter connections between {Formula-SAT} and shaving logs.
\newblock In {\em ICALP}, volume 107 of {\em LIPIcs}, pages 8:1--8:18, 2018.

\bibitem[ABW15]{ABWV15}
Amir Abboud, Arturs Backurs, and Virginia~Vassilevska Williams.
\newblock Tight hardness results for lcs and other sequence similarity measures.
\newblock In {\em 56th Annual IEEE Symposium on Foundations of Computer Science (FOCS)}, pages 59--78, 2015.
\newblock \href {https://doi.org/10.1109/FOCS.2015.14} {\path{doi:10.1109/FOCS.2015.14}}.

\bibitem[AG87]{ApostolicoG87}
Alberto Apostolico and Concettina Guerra.
\newblock The longest common subsequence problem revisited.
\newblock {\em Algorithmica}, 2:316--336, 1987.

\bibitem[AHU76]{AhoHU76}
Alfred~V. Aho, Daniel~S. Hirschberg, and Jeffrey~D. Ullman.
\newblock Bounds on the complexity of the longest common subsequence problem.
\newblock {\em J. ACM}, 23(1):1--12, 1976.

\bibitem[AHWW16]{AbboudHWW16}
Amir Abboud, Thomas~Dueholm Hansen, Virginia~Vassilevska Williams, and Ryan Williams.
\newblock Simulating branching programs with edit distance and friends: or: a polylog shaved is a lower bound made.
\newblock In {\em STOC}, pages 375--388. ACM, 2016.

\bibitem[AN20]{AN20}
Alexandr Andoni and Negev~Shekel Nosatzki.
\newblock Edit distance in near-linear time: it's a constant factor.
\newblock In Sandy Irani, editor, {\em 61st {IEEE} Annual Symposium on Foundations of Computer Science, {FOCS} 2020, Durham, NC, USA, November 16-19, 2020}, pages 990--1001. {IEEE}, 2020.
\newblock \href {https://doi.org/10.1109/FOCS46700.2020.00096} {\path{doi:10.1109/FOCS46700.2020.00096}}.

\bibitem[Apo86]{Apostolico86}
Alberto Apostolico.
\newblock Improving the worst-case performance of the {Hunt-Szymanski} strategy for the longest common subsequence of two strings.
\newblock {\em Inf. Process. Lett.}, 23(2):63--69, 1986.

\bibitem[AV21]{AVW21}
Shyan Akmal and Virginia {Vassilevska Williams}.
\newblock Improved approximation for longest common subsequence over small alphabets.
\newblock In Nikhil Bansal, Emanuela Merelli, and James Worrell, editors, {\em 48th International Colloquium on Automata, Languages, and Programming, {ICALP} 2021, July 12-16, 2021, Glasgow, Scotland (Virtual Conference)}, volume 198 of {\em LIPIcs}, pages 13:1--13:18. Schloss Dagstuhl - Leibniz-Zentrum f{\"{u}}r Informatik, 2021.
\newblock URL: \url{https://doi.org/10.4230/LIPIcs.ICALP.2021.13}, \href {https://doi.org/10.4230/LIPICS.ICALP.2021.13} {\path{doi:10.4230/LIPICS.ICALP.2021.13}}.

\bibitem[BCD23]{BCD23}
Karl Bringmann, Vincent Cohen{-}Addad, and Debarati Das.
\newblock A linear-time \emph{n}\({}^{\mbox{0.4}}\)-approximation for longest common subsequence.
\newblock {\em {ACM} Trans. Algorithms}, 19(1):9:1--9:24, 2023.
\newblock \href {https://doi.org/10.1145/3568398} {\path{doi:10.1145/3568398}}.

\bibitem[BHR00]{BergrothHR00}
Lasse Bergroth, Harri Hakonen, and Timo Raita.
\newblock A survey of longest common subsequence algorithms.
\newblock In {\em SPIRE}, pages 39--48. IEEE, 2000.

\bibitem[BK15]{BringmannK15}
Karl Bringmann and Marvin K{\"u}nnemann.
\newblock Quadratic conditional lower bounds for string problems and dynamic time warping.
\newblock In {\em FOCS}, pages 79--97. IEEE, 2015.

\bibitem[BK18]{BringmannK18}
Karl Bringmann and Marvin K{\"u}nnemann.
\newblock Multivariate fine-grained complexity of longest common subsequence.
\newblock In {\em SODA}, pages 1216--1235. SIAM, 2018.

\bibitem[EGGI92]{EppsteinGGI92}
David Eppstein, Zvi Galil, Raffaele Giancarlo, and Giuseppe~F. Italiano.
\newblock Sparse dynamic programming {I}: linear cost functions.
\newblock {\em J. ACM}, 39(3):519--545, 1992.

\bibitem[ES35]{ES35}
P.~Erd\"os and G.~Szekeres.
\newblock A combinatorial problem in geometry.
\newblock {\em Compositio Mathematica}, 2:463--470, 1935.
\newblock URL: \url{http://www.numdam.org/item/CM_1935__2__463_0/}.

\bibitem[Fre75]{F75}
Michael~L. Fredman.
\newblock On computing the length of longest increasing subsequences.
\newblock {\em Discret. Math.}, 11(1):29--35, 1975.
\newblock \href {https://doi.org/10.1016/0012-365X(75)90103-X} {\path{doi:10.1016/0012-365X(75)90103-X}}.

\bibitem[GJ21]{GJ21}
Pawel Gawrychowski and Wojciech Janczewski.
\newblock Fully dynamic approximation of {LIS} in polylogarithmic time.
\newblock In Samir Khuller and Virginia~Vassilevska Williams, editors, {\em {STOC} '21: 53rd Annual {ACM} {SIGACT} Symposium on Theory of Computing, Virtual Event, Italy, June 21-25, 2021}, pages 654--667. {ACM}, 2021.
\newblock \href {https://doi.org/10.1145/3406325.3451137} {\path{doi:10.1145/3406325.3451137}}.

\bibitem[Hir75]{Hir75}
Dan~S. Hirschberg.
\newblock A linear space algorithm for computing maximal common subsequences.
\newblock {\em Communications of the ACM}, 18(6):341--343, 1975.
\newblock \href {https://doi.org/10.1145/360825.360861} {\path{doi:10.1145/360825.360861}}.

\bibitem[Hir77]{Hirschberg77}
Daniel~S. Hirschberg.
\newblock Algorithms for the longest common subsequence problem.
\newblock {\em J. ACM}, 24(4):664--675, 1977.

\bibitem[HS77]{HuntS77}
James~W. Hunt and Thomas~G. Szymanski.
\newblock A fast algorithm for computing longest subsequences.
\newblock {\em Commun. ACM}, 20(5):350--353, 1977.

\bibitem[HSSS19]{HajiaghayiSSS19}
MohammadTaghi Hajiaghayi, Masoud Seddighin, Saeed Seddighin, and Xiaorui Sun.
\newblock Approximating {LCS} in linear time: Beating the {$\sqrt{n}$} barrier.
\newblock In {\em SODA}, pages 1181--1200. SIAM, 2019.

\bibitem[HSSS22]{HSSS22}
MohammadTaghi Hajiaghayi, Masoud Seddighin, Saeedreza Seddighin, and Xiaorui Sun.
\newblock Approximating longest common subsequence in linear time: Beating the {\textdollar}{\textbackslash}sqrt\{\{n\}\}{\textdollar} barrier.
\newblock {\em {SIAM} J. Comput.}, 51(4):1341--1367, 2022.
\newblock URL: \url{https://doi.org/10.1137/19m1272068}, \href {https://doi.org/10.1137/19M1272068} {\path{doi:10.1137/19M1272068}}.

\bibitem[IR09]{IliopoulosR09}
Costas~S. Iliopoulos and Mohammad~Sohel Rahman.
\newblock A new efficient algorithm for computing the longest common subsequence.
\newblock {\em Theory Comput. Syst.}, 45(2):355--371, 2009.

\bibitem[Knu73]{Knuth73}
Donald~E. Knuth.
\newblock {\em The Art of Computer Programming}, volume~3.
\newblock Addison-Wesley, 2nd edition, 1973.

\bibitem[KS21]{KS21}
Tomasz Kociumaka and Saeed Seddighin.
\newblock Improved dynamic algorithms for longest increasing subsequence.
\newblock In Samir Khuller and Virginia~Vassilevska Williams, editors, {\em {STOC} '21: 53rd Annual {ACM} {SIGACT} Symposium on Theory of Computing, Virtual Event, Italy, June 21-25, 2021}, pages 640--653. {ACM}, 2021.
\newblock \href {https://doi.org/10.1145/3406325.3451026} {\path{doi:10.1145/3406325.3451026}}.

\bibitem[MP80]{MasekP80}
William~J. Masek and Mike Paterson.
\newblock A faster algorithm computing string edit distances.
\newblock {\em J. Comput. Syst. Sci.}, 20(1):18--31, 1980.

\bibitem[Mye86]{Myers86}
Eugene~W. Myers.
\newblock An {$O(ND)$} difference algorithm and its variations.
\newblock {\em Algorithmica}, 1(2):251--266, 1986.

\bibitem[NKY82]{NakatsuKY82}
Narao Nakatsu, Yahiko Kambayashi, and Shuzo Yajima.
\newblock A longest common subsequence algorithm suitable for similar text strings.
\newblock {\em Acta Informatica}, 18:171--179, 1982.

\bibitem[RS20]{RS20}
Aviad Rubinstein and Zhao Song.
\newblock Reducing approximate longest common subsequence to approximate edit distance.
\newblock In Shuchi Chawla, editor, {\em Proceedings of the 2020 {ACM-SIAM} Symposium on Discrete Algorithms, {SODA} 2020, Salt Lake City, UT, USA, January 5-8, 2020}, pages 1591--1600. {SIAM}, 2020.
\newblock \href {https://doi.org/10.1137/1.9781611975994.98} {\path{doi:10.1137/1.9781611975994.98}}.

\bibitem[RSSS19]{RubinsteinSSS19}
Aviad Rubinstein, Saeed Seddighin, Zhao Song, and Xiaorui Sun.
\newblock Approximation algorithms for {LCS} and {LIS} with truly improved running times.
\newblock In {\em FOCS}, pages 1121--1145. IEEE, 2019.

\bibitem[WF74]{WF74}
Robert~A. Wagner and Michael~J. Fischer.
\newblock The string-to-string correction problem.
\newblock {\em J. {ACM}}, 21(1):168--173, 1974.
\newblock \href {https://doi.org/10.1145/321796.321811} {\path{doi:10.1145/321796.321811}}.

\end{thebibliography}

\appendix

\section{Symbols Occurrences Data Structure (Proof of \cref{lem:occ-ds})}\label{app:occ-ds}
In this section, we prove \cref{lem:occ-ds}; restated below.

\OccDS*
\begin{proof}
    
We implement the data structure as follows.
We initialize a self-balancing search tree (e.g. AVL tree or Red-Black tree) $T_I$ with $|x|$ elements, where initially the $i$th element of $T_I$ corresponds to the $i$th element $x_i$ of $x$.
Every node of $T_I$ also stores as auxiliary information the size of the sub-tree below it in $T_I$.
This information can be straightforwardly maintained in $O(\log n)$ time when $T_I$ undergoes a deletion, and it can be used to decide the current rank of a node among all remaining nodes in $O(\log n)$ time.

For each symbol $\sigma \in \Sigma$, the algorithm initializes a balanced search tree $T_{\sigma}$ storing all indices in $x$ in which $\sigma$ occurs.
The indices in $x$ are not stored explicitly as integers, but as pointers to the elements of $T_I$ - so when an index $i$ is deleted from $T_I$, all remaining indices larger than $i$ are implicitly shifted accordingly.
Clearly, given a letter $\sigma$ we can find all indices in $x$  in which $\sigma$ occurs in $\Otild(\#_\sigma(x))$ time  using $T_I$ and $T_{\sigma}$.
Deletion of the $i$th element is implemented by removing the $i$th element of $T_i$ and the corresponding element in $T_{x_i}$.
\end{proof}

\end{document}